\newcommand{\eemph}[1]{\textbf{\textit{#1}}}
\newcommand{\R}{\mathbb{R}}
\newcommand{\bmat}[1]{\begin{bmatrix}#1\end{bmatrix}}
\def\one{\mathbf{1}}
\def\tp{\mathsf{T}}
\let\mathopfont=\mathrm
\newcommand{\range}{\mathop{\mathopfont{range}}}
\let\bl\bigl
\let\br\bigr
\newbox\vcbox
\def\vcent#1{\setbox\vcbox\hbox{#1}\raise -0.5\ht\vcbox\hbox{#1}}
\def\ugn{\lambda}
\def\itoj{{i \shortto j}}
\def\jtoi{{j \shortto i}}
\def\wu{\omega^\text{u}}
\DeclarePairedDelimiter\abs{\lvert}{\rvert}
\def\wss{\omega^\text{ss}}
\def\bwss{{\bar\omega}^\text{ss}}
\newcounter{edremcounter}
\DeclareRobustCommand{\shortto}{\mathrel{\mathpalette\short@to\relax}}
\newcommand{\short@to}[2]{\mkern2mu
  \clipbox{{.5\width} 0 0 0}{$\m@th#1\vphantom{+}{\shortrightarrow}$}}
\apptocmd{\sloppy}{\hbadness 10000\relax}{}{}
\newcommand{\bittide}{bittide\xspace}
\def\ss{{\text{ss}}}
\def\off{{\text{off}}}
\def\tbeta{\tilde{\beta}}
\def\ttheta{\tilde{\theta}}
\def\eat#1{}
\DeclareMathOperator{\sign}{sign}
\DeclareMathOperator{\src}{src}
\DeclareMathOperator{\dst}{dst}
\begin{document}

\title{Buffer Centering for \bittide Synchronization via Frame Rotation}

\author{%
  Sanjay Lall\footnotesymbol{1}
  \and  Tammo Spalink\footnotesymbol{2}}

\note{Preprint}

\maketitle

\makefootnote{1}{S. Lall is with the Department of Electrical
  Engineering at Stanford University, Stanford, CA 94305, USA, and is
  a Visiting Researcher at Google DeepMind.
  \texttt{lall@stanford.edu}\medskip}

\makefootnote{2}{Google DeepMind.}

\begin{abstract}
  Maintaining consistent time in distributed systems is a fundamental
  challenge. The \bittide system addresses this by providing logical
  synchronization through a decentralized control mechanism that
  observes local buffer occupancies and controls the frequency
  of an oscillator at each node. A critical aspect of \bittide's
  stability and performance is ensuring that these elastic buffers
  operate around a desired equilibrium point, preventing data loss due
  to overflow or underflow. This paper introduces a novel method for
  centering buffer occupancies in a \bittide network using a technique
  we term \emph{frame rotation.} 
  We propose a control strategy utilizing a directed
  spanning tree of the network graph. By adjusting the frequencies of nodes
  in a specific order dictated by this tree, and employing a pulsed
  feedback controller that targets the buffer occupancy of edges
  within the spanning tree, we prove that all elastic buffers in the
  network can be driven to their desired equilibrium. This ordered
  adjustment approach ensures that prior centering efforts are not
  disrupted, providing a robust mechanism for managing buffer
  occupancy in \bittide synchronized systems.
\end{abstract}

\section{Introduction}

In distributed computing, maintaining a consistent sense of time
across independent machines presents a fundamental challenge.
Traditional approaches often rely on physical clock distribution or
software protocols to keep local clocks aligned with wall-clock
time. However, these methods can be expensive, introduce asynchrony
with performance consequences, and become impractical at data-center
scales. The \bittide system addresses some of these 
limitations~\cite{spalink_2006} by obviating the need for physical clock
distribution or strict adherence to wall-clock time.

The core innovation of \bittide lies in providing applications with a
notion of time which is \emph{logically synchronized} between nodes~\cite{ls,kenwright2024,prasad}.
This is achieved through a decentralized control mechanism where each
node adjusts its frequency based on observed communication with its
neighbors, which in turn allows construction of a synchronous logical clock
that is unaffected by variations in the underlying physical clock
frequencies.  The \bittide system establishes a shared logical time
across the system, which may be fully disconnected from physical
wall-clock time, allowing logical time-steps to vary in physical
duration both over time and between nodes.  However, from the
perspective of applications running on the system, the behavior is
identical to that of a system with a single shared physical clock.  By
coordinating actions using this logical time, the need to reference
physical time is eliminated.

The decentralized nature of \bittide's synchronization mechanism
enables the construction of large-scale systems which behave as if
they are perfectly synchronized, typically very difficult or
prohibitively expensive to achieve using other methods.

Unlike overlaying synchronization information onto asynchronous
communication layers, which can lead to high communication overhead
and limited accuracy, \bittide leverages the low-level data flows
inherent in serial data links for synchronization. Notably, the
synchronization mechanism requires no additional communication
overhead, as the continuous data exchange at the physical layer
provides a direct feedback signal to the control system.  This allows
for accurate logical synchronization even with an underlying substrate
that is only approximately synchronized.
The \bittide mechanism operates at Layer 1 (physical level) of the OSI
network model, and synchronization occurs
with each node observing only local buffer occupancy levels
associated with links connecting to network neighbors.

Data is transmitted in fixed-size frames and incoming
frames at each node are placed in per-link elastic buffers. A crucial aspect of the
\bittide mechanism is that whenever a frame is removed from the head of
an elastic buffer, a new frame is sent on each outgoing link. In
systems with multiple neighbors, frames are sent simultaneously on all
outgoing links in discrete lockstep.  The oscillator at each node
drives both the processor clock and the network, ensuring that the
lockstep behavior of the network induces a similar behavior in the
processors.

The number of frames in each elastic buffer is measured locally at
each node, and the oscillator frequencies at the nodes are adjusted.
This decentralized control scheme is responsible for keeping frequencies
aligned and ensuring the buffers neither overflow nor underflow. 

In this paper we describe a new method for centering the buffer
occupancies in a \bittide network.  We discuss \emph{frame rotation}, a
method by which a controller may adjust the buffer occupancies in a
\bittide system.  The name alludes to the balance of frames in the 
system being 'rotated' between nodes, taking advantage of a 
\bittide property that total frame counts for every cyclic path in 
the network remain constant.

\section{Notation and preliminaries}

The network model for \bittide used here is a directed graph $\mathcal
G$ with $n$ nodes and $m$ edges, with vertex set $\mathcal{V} =\{1,\dots,n\}$
and edge set $\mathcal{E} =\{1,\dots,m\}$. We will refer to edges interchangeably either
by a source destination pair  $i \to j$ or by edge number $k\in\mathcal{E}$.
The graph has no self loops. The
incidence matrix is $B = S- D$ where $S \in\R^{n \times m}$ is the
source incidence matrix, given by
\[
S_{ie} = \begin{cases}
  1 & \text{if  node $i$ is the source of edge $e$} \\
  0 &\text{otherwise}
\end{cases}
\]
and $D \in\R^{n \times m}$ is the destination incidence matrix 
\[
D_{ie} = \begin{cases}
  1 & \text{if node $i$ is the destination of edge $e$} \\
  0 &\text{otherwise}
\end{cases}
\]
For convenience, we use $\one$ to denote the vector of all ones. Let
$E^j\in\R^{m\times m}$ be the matrix whose entries are all zero except
for $E^j_{jj} = 1$.

We assume the graph is \eemph{strongly connected} or
\eemph{irreducible}, meaning that there exists a directed path in both
directions between any pair of vertices.  Given the graph, suppose
$A\in\R^{n\times n}$ is a nonnegative matrix with sparsity pattern
corresponding to the adjacency, so that that $A_{ij} > 0$ if $i \to j$
is an edge, and $A_{ij}=0$ if $i \neq j$ and $i\to j$ is not an
edge. The matrix $A$ is called irreducible if the corresponding graph
is irreducible, irrespective of the entries on the diagonal.

A matrix $A\in\R^{n\times n}$ is called \eemph{Metzler} if
$A_{ij}\geq0$ for all $i \neq j$, and it is called a \eemph{rate
  matrix} if in addition each of its rows sums to zero. If $A$ is both
Metzler and irreducible, then from the standard Perron-Frobenius
theory there is an eigenvalue $\lambda_\text{metzler}$ which is real,
and which has corresponding positive left and right eigenvectors.  All
other eigenvalues $\lambda$ satisfy $\Re(\lambda) <\lambda_\text{metzler}$.
If $A$ is a rate matrix, then $e^A$ is a stochastic matrix.
In particular,~$DB^\tp$ is a rate matrix with the sparsity of $\mathcal G$
and hence it is irreducible iff $\mathcal G$ is.

Let $\mathcal{T} \subset \mathcal{E}$ be an outward directed spanning
tree. Any spanning tree will do. Let the root node
be~$r\in\mathcal{V}$. There is a natural partial ordering on edges in
$\mathcal{T}$ induced by the tree, where two edges are defined to
satisfy $f \prec g$ if there is a directed walk of non-zero length
from $\dst(f)$ to $\dst(g)$ in the tree.  The actions of our centering
algorithm will follow this ordering.

\section{Model}

A model for the \bittide system on an undirected graph, called the
\emph{abstract frame model}, is developed in~\cite{bms}. That model is
frame accurate, in that it predicts the precise location of every
network frame in the system. For control, we make use of an
approximate differential equation model based on several simplifying
assumptions, including a fluid approximation, and replacement of the
discrete-time control with continuous-time control; for a discussion
of this approximation see~\cite{res}.  Here we consider the directed
graph case, which is a minor change, and build on that model.
Following~\cite{reset} we will simplify the dynamic model by assuming
that the latencies $l_\jtoi$ are zero. Comparison with both hardware and more
detailed simulation of the abstract frame model
which includes latency and individual frames 
have been performed in~\cite{reset,bms,qbay}, and so we do not address that here.
The model is as follows.
\begin{equation}
  \label{eqn:model}
  \begin{aligned}
    \dot\theta_i(t) &= \omega_i(t) \\
    \omega_i(t) &= \wu_i + c_i(t)\\
    \beta_\jtoi(t) &= \theta_j(t) - \theta_i(t) + \ugn_\jtoi\\
    y_i &=  \sum_{j \mid j \to i} (\beta_\jtoi - \beta^\text{off}_\jtoi) 
  \end{aligned}
\end{equation}
Here $i,j \in\mathcal{V} = \{1,\dots,n\}$ refer to graph vertices, and
$\theta_i$ is the clock phase at node $i$, which evolves with
frequency $\omega_i$.
This frequency is the sum of two
terms, the first is $\wu_i$, the \emph{uncontrolled} frequency, which
is frequency of the oscillator without any control, and the
\emph{correction}, $c_i$, a frequency adjustment which is chosen by
the controller. 

At node $i$ there is an elastic buffer associated with each incoming
link $j\to i$, which contains $\beta_\jtoi$ frames. The controller at
node $i$ measures the sum of the occupancies of the buffers, denoted
$y_i$.  The constant $\lambda_\jtoi$ is a property of the link, and
the constant $\beta^\text{off}_\jtoi$ is known and set on
initialization. The base frequencies of the oscillators $\wu_i$ are
constant but unknown.  At each node $i$ the controller measures $y_i$
and chooses the frequency correction $c_i$. In practice this is
sampled, but in this paper we will assume that the sampling is fast
enough that the controller can be treated as continuous-time.

The fundamental dynamics of \bittide are as follows. At each node $i$
there is an oscillator of frequency $\omega_i$, which drives the clock
phase $\theta_i$. We refer to the local clock \emph{ticks} at node $i$
as those times $t$ at which $\theta_i(t)$ is an integer. Nodes are
connected by network links, corresponding to the edges of the
graph. At each node there is one FIFO buffer for each incoming link,
and incoming data frames are stored in the buffer. With each local
clock tick, a frame is removed from all of the buffers, and passed to
the processor at that node. In addition, with each local tick, on each
outgoing link, a new frame is sent by the processor. As a result of
these dynamics, the number of frames in the buffer at node $i$
corresponding to the link from node $j$, denoted $\beta_\jtoi$, is
approximately given by~\eqref{eqn:model}. An explicit derivation
of this is given in~\cite{bms}.

At each node there is a controller, which observes the occupancies of
each of the buffers at that node. It uses this observation in order to
set the oscillator correction~$c_i$. The idea is that, if the
oscillator at node $i$ is slower than that of its neighbors, then it
will send frames less frequently than it receives them, and its
buffers will start to fill up; conversely, too fast and it's buffers
will drain. This motivates a controller of the form
\begin{equation}
  \label{eqn:ctrl}
  c_i(t) = k \sum_{j \mid j \to i} (\beta_\jtoi - \beta^\text{off}_\jtoi) 
\end{equation}
Here $\beta^\text{off}$ is the desired equilibrium point, which
is usually the midpoint of the buffer. By adjusting the correction, the nodes
must not only manage to ensure that all nodes tick at approximately
the same frequency, but they must also ensure that all of the buffer
occupancies remain close to their corresponding $\beta^\text{off}$.
The latter is particularly important, because buffer overflow or
underflow will cause running applications to lose data, and is a fatal
error. The controller~\eqref{eqn:ctrl} is a proportional-plus-offset
controller. We will make use of the constant term $q_i$ in the
controller to adjust the buffer equilibrium points, as discussed
below.

Using the incidence matrices
$B\in\R^{n\times m}$ and $D\in\R^{n\times m}$, we can write
\begin{equation}
  \label{eqn:model2}
  \begin{aligned}
    \dot\theta(t) &= \omega(t) \\
    \omega(t) &= \wu + c(t)\\
    \beta(t) &= B^\tp \theta(t) + \ugn\\
    y &=  D (\beta - \beta^\text{off}) 
  \end{aligned}
\end{equation}
Following~\cite{reset}, 
we make the following assumption about the system boot, referred to as
\emph{feasibility}.
\begin{asm}
  \label{asm:feas}
  There exists some time $t^0$ at which the buffer occupancy $\beta(t^0) = \beta^\text{off}$.
\end{asm}
We now define for convenience the following choice of normalized coordinates.
\begin{equation}
  \label{eqn:normalization}
  \ttheta(t) = \theta(t)  - \theta(t^0) \qquad
  \tbeta(t) = \beta(t) - \beta^\text{off}
\end{equation}
We will define for convenience the directed Laplacian matrix
\[
Q = DB^\tp
\]
This gives the following.
\begin{prop}
  \label{prop:model}
  Under Assumption~\ref{asm:feas}, the dynamics~\eqref{eqn:model2} are equivalent
  to the following
  \begin{align*}
    \dot{\ttheta}(t) &= \wu + c(t) \\
    \tbeta(t) &= B^\tp \ttheta(t)\\
    y &= D \tbeta(t)
  \end{align*}
  together with the boundary condition $\ttheta(t^0)=0$.
\end{prop}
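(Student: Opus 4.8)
The plan is to verify the three transformed equations and the boundary condition directly by substituting the definitions~\eqref{eqn:normalization} into the original dynamics~\eqref{eqn:model2}, with the sole nontrivial ingredient being the use of Assumption~\ref{asm:feas} to eliminate the unknown constant~$\ugn$.

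First I would handle the phase equation and the boundary condition, which require no assumptions. Since $\theta(t^0)$ is a constant vector, differentiating $\ttheta(t) = \theta(t) - \theta(t^0)$ gives $\dot{\ttheta}(t) = \dot\theta(t) = \omega(t) = \wu + c(t)$, recovering the first equation. Evaluating the definition at $t = t^0$ gives $\ttheta(t^0) = \theta(t^0) - \theta(t^0) = 0$, which is the stated boundary condition.

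The key step is the buffer equation, where feasibility enters. Evaluating $\beta(t^0) = B^\tp\theta(t^0) + \ugn$ and invoking Assumption~\ref{asm:feas}, namely $\beta(t^0) = \beta^\text{off}$, I would solve for the unknown offset as $\ugn = \beta^\text{off} - B^\tp\theta(t^0)$. Substituting this back into $\beta(t) = B^\tp\theta(t) + \ugn$ and using the definition of $\tbeta$ yields
\[
\tbeta(t) = \beta(t) - \beta^\text{off} = B^\tp\theta(t) + \ugn - \beta^\text{off} = B^\tp\bl(\theta(t) - \theta(t^0)\br) = B^\tp\ttheta(t),
\]
which is the second equation. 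Finally, the output equation is immediate from the definitions: $y = D(\beta - \beta^\text{off}) = D\tbeta(t)$, with no appeal to feasibility needed.

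This proposition is essentially a linear change of coordinates, so I expect no genuine obstacle; the only point requiring care is recognizing that Assumption~\ref{asm:feas} is exactly what is needed to cancel the unknown link constant~$\ugn$ against~$\beta^\text{off}$, thereby producing an autonomous system in the normalized variables whose buffer occupancy $\tbeta$ is pinned to the phase difference $B^\tp\ttheta$ alone. The equivalence is bidirectional because each substitution is invertible, so I would note that the transformed system together with its boundary condition reproduces the original trajectories exactly.
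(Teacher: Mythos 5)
Your proof is correct and matches the argument the paper intends: the paper actually states Proposition~\ref{prop:model} without any proof, treating the change of coordinates as immediate, and your direct substitution fills in exactly that omitted verification. The one substantive step --- invoking Assumption~\ref{asm:feas} to solve $\ugn = \beta^\text{off} - B^\tp\theta(t^0)$ and thereby cancel the unknown link constant --- is handled correctly, as is the brief remark on invertibility that justifies the claimed equivalence in both directions.
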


\paragraph{Properties of the Laplacian.}
If the graph is irreducible then the matrix $Q$ is an
irreducible rate matrix. Let $z>0$ be it's Metzler eigenvector,
normalized so that $\one^\tp z = 1$. Let the eigendecomposition of $Q$ be
$QT = TD$, then we have
\[
D = \bmat{0 & 0 \\ 0 & \Lambda}
\quad
T = \bmat{\one & T_2}
\quad
T^{-1} = \bmat{z^\tp \\ V_2^\tp}
\]
Define the matrix
\[
Q^\ddag = T \bmat{0 & 0 \\ 0 & \Lambda^{-1}}T^{-1}
\]
which satisfies $Q Q^\ddag Q = Q$ and so is a generalized inverse of $Q$,
and the associated projector $W = I - Q^\ddag Q$. Explicitly, we have $W =
\one z^\tp$. It is immediate that $WQ = QW = WQ^\ddag = Q^\ddag W =0$. Further,
since $We^{Qt} = W$ we have
\[
e^{Qt} = W + Q^\ddag Q e^{Qt}
\]
and so
\[
e^{Qt} = \frac{d}{dt}\bbl( Wt + Q^\ddag e^{Qt} \bbr)
\]

\paragraph{Behavior with proportional control.}

Previous work has considered use of proportional and proportional-integral control~\cite{bms,reset}.
Our approach in this paper builds on this, applying two stages of control. In the first stage,
we will use proportional control, and subsequently the control will switch to a sequence
of controllers which use a pulsed input. The closed-loop dynamics for $\ttheta$ become
\[
\dot{\ttheta}(t) = kQ \ttheta(t) + \wu
\]
The matrix $Q$ is not Hurwitz, so this dynamics is
not strictly stable and $\theta(t) \to \infty$ as $t \to \infty$. This
is expected, since $\theta$ is the clock phase and must grow without bound.
Despite this, the buffer occupancy converges. We will need the solution
of this system, which we state here for convenience.
\begin{prop}
  Suppose $\ttheta$ and $\tbeta$ satisfy the dynamics of Proposition~\ref{prop:model},
  and the controller is given by $c(t) = k y(t)$ where $k>0$.  Then
  \[
  \ttheta(t) = \bbl(
  Wt + k^{-1} Q^\ddag (e^{kQt} - I)
  \bbr) \wu
  + e^{kQt} \ttheta(0)
  \]
  and hence as $t\to\infty$ we have $\theta(t) \to \theta^\ss$ and $\beta(t) \to \beta^\ss$ where
  \[
  \omega^\ss = W \wu \qquad \beta^\text{ss} = -k^{-1} B^\tp Q^\ddag \wu
  \]
\end{prop}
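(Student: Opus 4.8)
The plan is to eliminate the control law to obtain a single linear inhomogeneous ODE in $\ttheta$, solve it by variation of constants, and then read off the two limits from the spectral structure of $Q$. First I would substitute $y = D\tbeta = DB^\tp\ttheta = Q\ttheta$ into $c = ky$ to obtain the closed loop $\dot{\ttheta}(t) = kQ\ttheta(t) + \wu$. The variation-of-constants formula for a constant-input linear system gives $\ttheta(t) = e^{kQt}\ttheta(0) + \int_0^t e^{kQ(t-s)}\wu\,ds$, so the only real computation is the integral $\int_0^t e^{kQ\tau}\,d\tau$. For this I would invoke the antiderivative identity stated in the excerpt. Scaling $Q\mapsto kQ$ leaves the projector unchanged, since $(kQ)^\ddag = k^{-1}Q^\ddag$ gives $I - (kQ)^\ddag(kQ) = I - Q^\ddag Q = W$; hence the identity reads $e^{kQt} = \frac{d}{dt}\bigl(Wt + k^{-1}Q^\ddag e^{kQt}\bigr)$, and integrating from $0$ to $t$ produces exactly $Wt + k^{-1}Q^\ddag(e^{kQt} - I)$. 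Multiplying by $\wu$ and adding the homogeneous term reproduces the claimed closed form for $\ttheta(t)$.

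Next I would establish that $e^{kQt}\to W$ as $t\to\infty$. Using the eigendecomposition $QT = TD$ from the excerpt, $e^{kQt} = \one z^\tp + T_2 e^{k\Lambda t}V_2^\tp$; since $Q$ is an irreducible rate matrix, Perron--Frobenius gives that $0$ is a simple eigenvalue (with $\lambda_\text{metzler}=0$) and every eigenvalue collected in $\Lambda$ has strictly negative real part, so $e^{k\Lambda t}\to 0$ and the limit is $\one z^\tp = W$. For the frequency I use $\omega(t) = \dot{\ttheta}(t) = kQ\ttheta(t) + \wu = kQe^{kQt}\ttheta(0) + e^{kQt}\wu$; letting $t\to\infty$ and applying $QW = 0$ annihilates the first term, leaving $\omega(t)\to W\wu = \omega^\ss$. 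Note that the phase $\theta$ itself does not converge; it is the frequency that settles.

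The step I expect to be the crux is explaining why the buffer occupancy converges even though $\ttheta$ does not: the closed form contains the term $Wt\wu$, which grows linearly in $t$, as it must since $\theta$ is an unbounded clock phase. The resolution is that $B^\tp$ annihilates this growth. Because each edge has exactly one source and one destination, $B^\tp\one = 0$, and therefore $B^\tp W = B^\tp\one z^\tp = 0$. Applying $B^\tp$ to the closed form thus removes the $Wt\wu$ term entirely, and also kills the homogeneous contribution via $B^\tp e^{kQt}\ttheta(0)\to B^\tp W\ttheta(0) = 0$. In the surviving term I would use $Q^\ddag W = 0$ to evaluate $Q^\ddag(e^{kQt} - I)\wu \to Q^\ddag(W - I)\wu = -Q^\ddag\wu$. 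Collecting the limits gives $\tbeta(t) = B^\tp\ttheta(t)\to -k^{-1}B^\tp Q^\ddag\wu = \beta^\ss$, which completes the argument.
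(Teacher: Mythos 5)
Your proof is correct and takes essentially the same route as the paper: the paper's proof is a one-line appeal to variation of constants together with the facts $\lim_{t\to\infty}e^{Qt}=W$ (from the Metzler/Perron--Frobenius structure), $Q^\ddag W=0$, and $B^\tp W=0$, which are exactly the ingredients you supply in detail. Your observation that $\theta$ itself does not converge (only $\omega$ and $\beta$ do, with $B^\tp$ annihilating the $Wt\,\wu$ growth) is the right reading of the statement and matches the paper's intent.
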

\begin{proof}
  The proof follows from the standard variation-of-constants formula for linear dynamical systems.
  Note that we make use of the property that $Q$ is Metzler to conclude
  \[
  \lim_{t\to\infty} e^{Qt}  = W
  \]
  along with $Q^\ddag W =0$ and $B^\tp W = 0$.
\end{proof}
One important observation is that, a proportional controller will
drive the system to an equilibrium point for which all nodes have the
same frequency. This follows from the above expression for $\wss$,
because $W=\one z^\tp$. Define $\bwss = z^\tp\wu$ to be this frequency.

\section{Control of Buffer Occupancy}

The objective of the control system is to ensure that the buffer
occupancies $\beta_\itoj$ are kept to prescribed levels.  To do this,
it is essential that all nodes maintain, on average, approximately the
same frequency, since the buffer occupancy for edge $i\to j$ increases
at a rate proportional to the difference between the frequency of node
$i$ and that of node $j$. If on average node $i$ has a higher
frequency than node $j$, then the buffer occupancy will increase
without bound.

There are several approaches for achieving this. A
proportional-integral controller is used in~\cite{bms}, and a reset
controller is used in~\cite{reframing}. Both of these methods are
effective at ensuring that buffer occupancies are kept close to
$\beta^\text{off}$. One of the difficulties with controlling \bittide
is that any control scheme must be flexible enough to allow nodes to
be be added and removed while the system is operational. Here we
present a method to allow control of buffer occupancies directly. We allow
individual nodes to apply feedback control in such a way as to control
the local buffer occupancies, and show that this achieves a
desirable outcome for all buffer occupancies on the network.

When all nodes are at the equilibrium frequency, the critical
observation is that a single node $i$ can adjust the buffer occupancy
at the elastic buffers for its incoming links. To do this, it
temporarily changes its frequency $\omega_i$, while the other nodes
keep their frequencies constant. This will cause the elastic buffers
at node $i$ to drain. It can therefore set one of the elastic buffer
occupancies to the midpoint, simply by increasing or decreasing its
frequency for a short amount of time.

\subsection{Example: triangular network}
\label{s:example}
Consider the system with three nodes illustrated in
Figure~\ref{fig:three}. In this system, the nodes start at frequencies
$\wu$ and rapidly converge to a common frequency. At approximately
$t\approx 400e6$, node 2 reduces its frequency, as shown in
Figure~\ref{fig:threefreq}. With this reduced frequency, the elastic
buffers at node 2 start to fill, as can be seen in
Figure~\ref{fig:threemocc}. Node 2 observes the occupancy of the
elastic buffer $\beta_{1\shortto 2}$, and when it reaches the
midpoint, it resets its frequency to the equilibrium value.

\begin{figure}[ht!]
  \centerline{\begin{overpic}[width=0.28\linewidth]{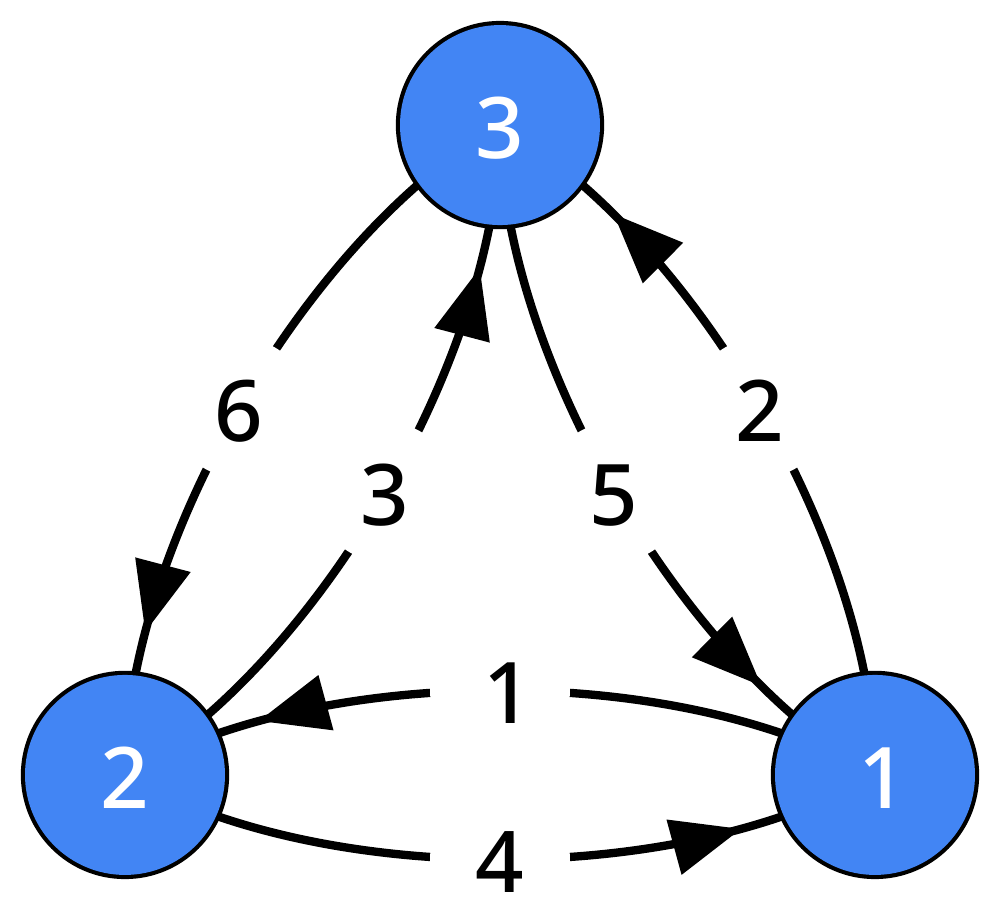}\end{overpic}}
  \captionsetup{margin=30pt}
  \caption{Graph used for example in \S\ref{s:example}}
  \label{fig:three}
\end{figure}

This strategy achieves the immediate goal of centering
$\beta_{1\shortto 2}$.  By chance, it also centers the buffer
occupancy of $\beta_{2\shortto 1}$.  In addition, the buffer
occupancies of the elastic buffers associated with the other edges
either incoming to node 2 or outgoing from node 2 are also affected.
This suggests that the strategy of each node simply successively
centering its elastic buffers will not work, since each nodes actions
will potentially de-center the changes which happened before.  This
can certainly happen; for example, if node 1 now increases its
frequency for a short period in order to center the buffer occupancy
of edge 5, this will then cause the elastic buffers of edges 1 and 4 to become
de-centered again. This is shown in Figure~\ref{fig:threepost},
where edge 5 is controlled starting at time $t\approx 700e6$.

\begin{figure}[ht!]
  \centerline{\begin{overpic}[width=\linewidth]{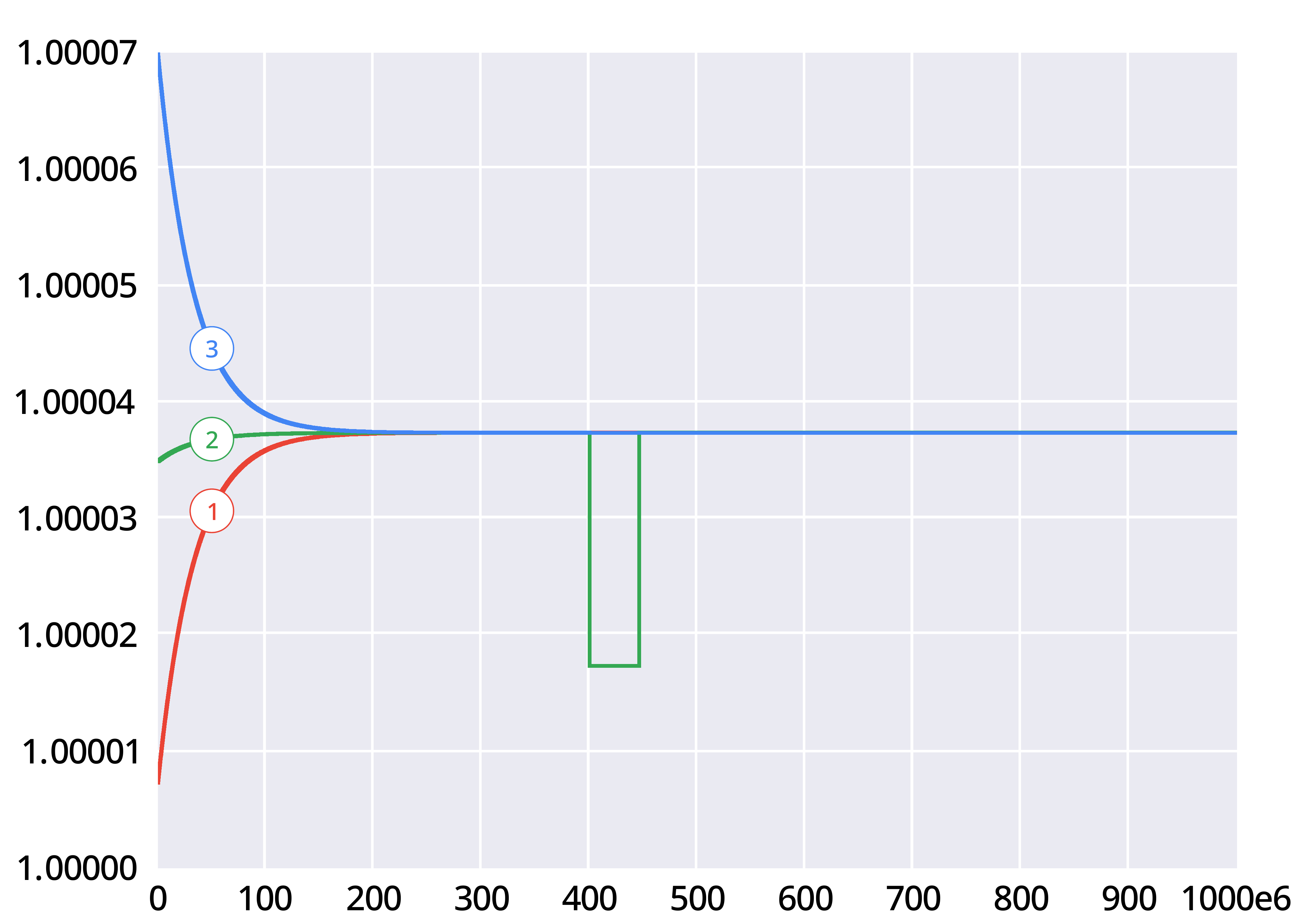}
      \put(13,62){\small$\omega$}
      \put(91,5){\small$t$}
    \end{overpic}}
  \caption{Frequency behavior as a function of time for the system in Figure~\ref{fig:three}.}
  \label{fig:threefreq}
\end{figure}

\begin{figure}[ht!]
  \centerline{\begin{overpic}[width=\linewidth]{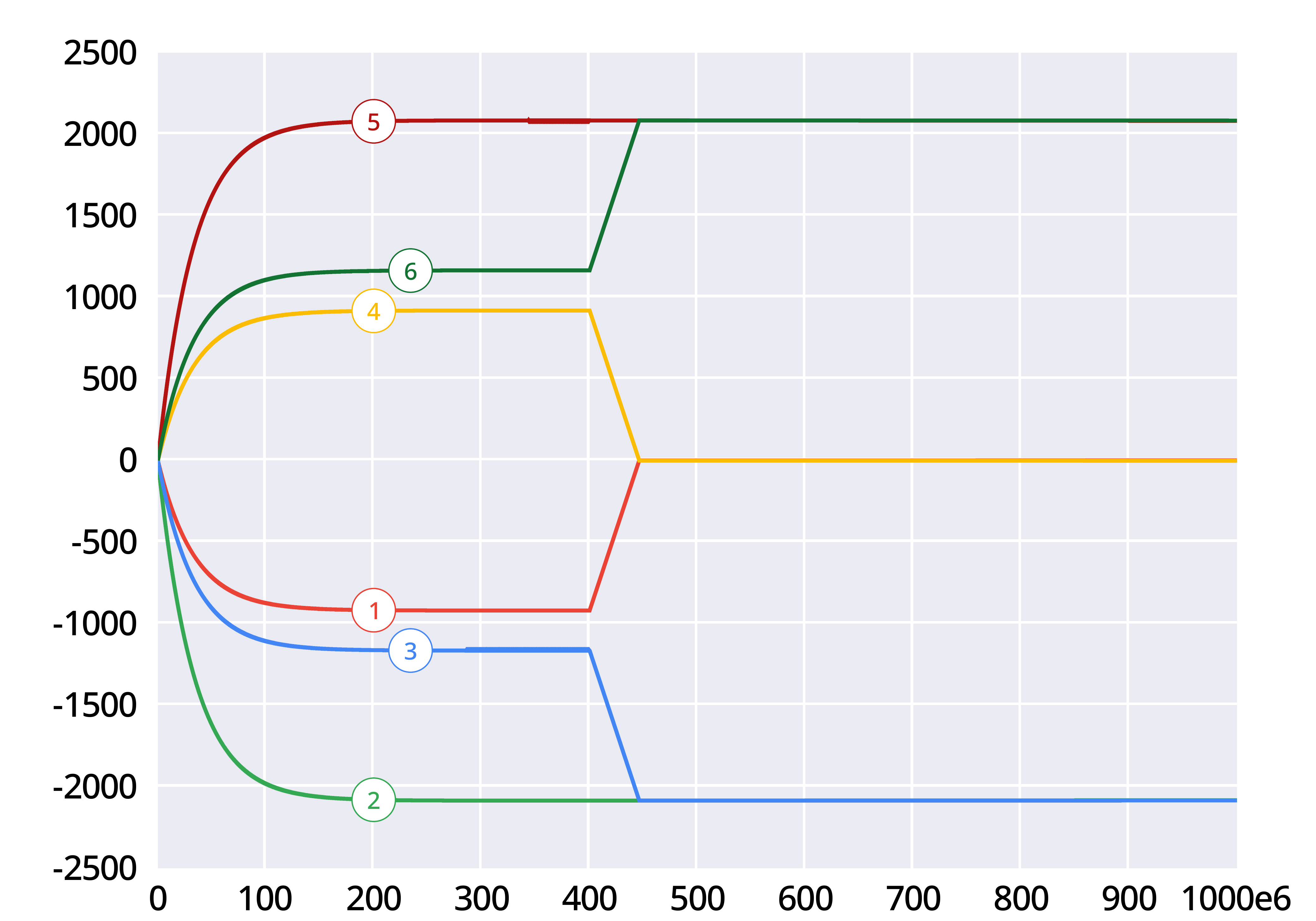}
      \put(13,61){\small$\tbeta$}
      \put(91,5){\small$t$}
    \end{overpic}}
  \caption{Relative buffer occupancies for the system in Figure~\ref{fig:three}.}
  \label{fig:threemocc}
\end{figure}

\begin{figure}[ht!]
  \centerline{\begin{overpic}[width=\linewidth]{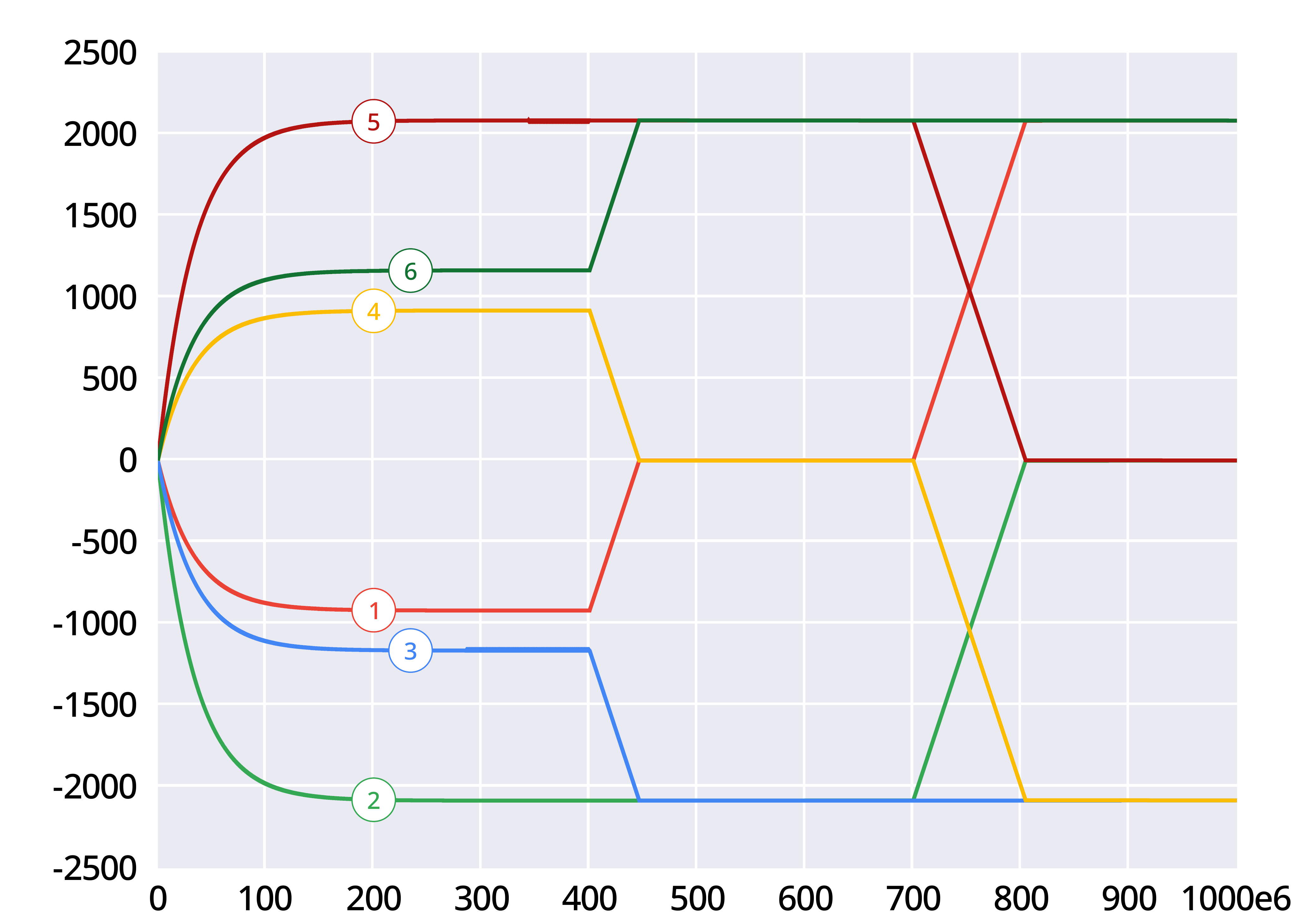}
      \put(13,61){\small$\tbeta$}
      \put(91,5){\small$t$}
    \end{overpic}}
  \caption{Relative buffer occupancies for the system in Figure~\ref{fig:three} after
  centering by both nodes 2 and node 1.}
  \label{fig:threepost}
\end{figure}

The simulations in this paper were performed with
Callisto~\cite{callisto}, which is a full simulation of the individual
frames in a bittide network, including latency. The buffer occupancies
observed in simulation match those predicted in the theory of this paper,
despite the different levels of modeling fidelity.

\subsection{Example: mesh}
In this paper, we present an approach for solving this problem. Specifically,
we show that there is an ordering in which nodes can apply corrections to
the elastic buffer occupancies, and that after applying this sequence
of corrections, \emph{all} of the buffer occupancies on the graph are centered.

Our approach is as follows. First, we construct a directed spanning
tree on the graph. An example is shown in Figure~\ref{fig:mesh}. All
paths on the spanning tree lead away from the (arbitrary) root node; in this
example, the root node is 2. Each edge has a corresponding elastic
buffer at its destination node.  The proposed control policy must
satisfy the following requirement.  All nodes on the network are
adjusted apart from the root node. If node $i$ is an ancestor of node
$j$ in the tree, and $i$ is not the root, then we must adjust the
frequency of node $i$ before we adjust that of node $j$. This ordering
is the partial ordering corresponding to the spanning tree.

\begin{figure}[ht!]
  \centerline{\begin{overpic}[width=0.65\linewidth]{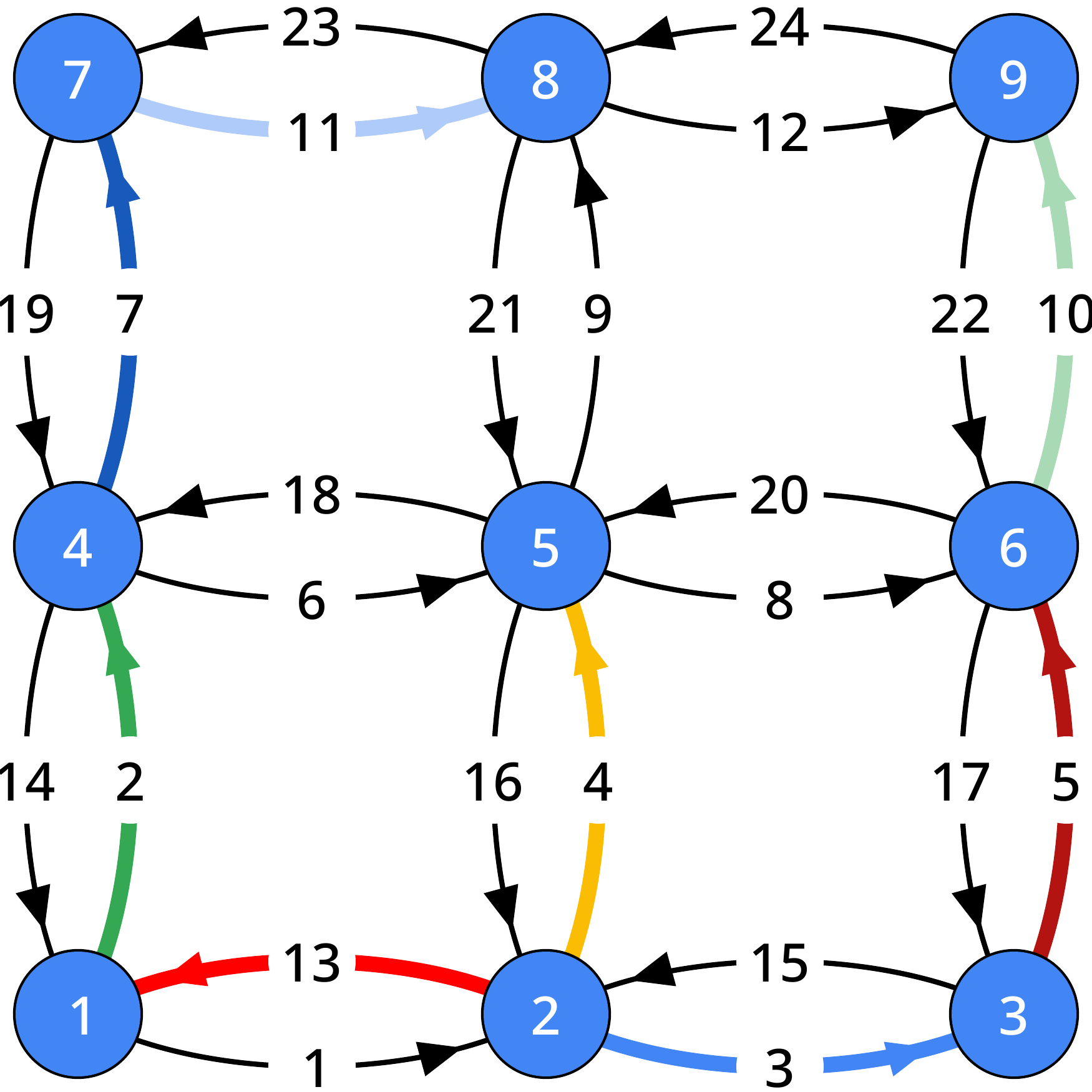}\end{overpic}}
  \captionsetup{margin=30pt}
  \caption{Graph with directed spanning tree}
  \label{fig:mesh}
\end{figure}

For the example in Figure~\ref{fig:mesh}, one possible ordering of
edge adjustments is $(3,5,10,4,13,2,7,11)$. The resulting behavior of
the elastic buffers is shown in Figure~\ref{fig:meshocc} where the
edges are adjusted at times $500, 750, \dots, 2250$.

\begin{figure}[ht!]
  \centerline{\begin{overpic}[width=\linewidth]{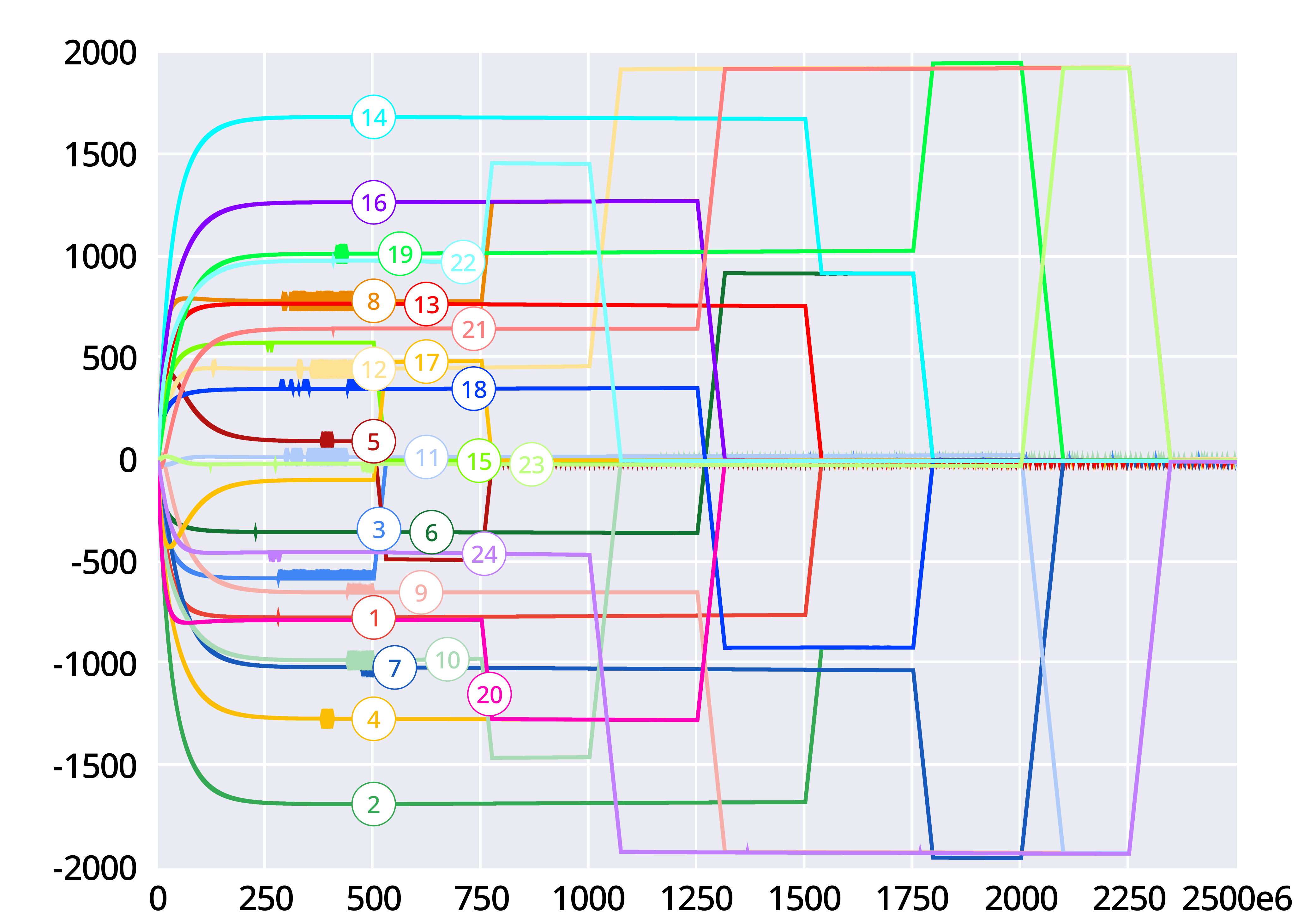}
      \put(13,61){\small$\tbeta$}
      \put(91,5){\small$t$}
    \end{overpic}}
  \caption{Relative buffer occupancies for the system in Figure~\ref{fig:mesh} after
  centering by all nodes in sequence.}
  \label{fig:meshocc}
\end{figure}

\section{Formulation of the controller}

In this section we state precisely the controller that we will use, and
show that it has the desired outcome of reducing all buffer
occupancies to the midpoint; that is $\beta = \beta^\off$, or
equivalently $\tbeta = 0$.

\begin{defn}
  \label{defn:ctrl}
  Suppose $\mathcal{T}$ is an outward directed spanning tree with root
  $r$. Let $g_1,\dots,g_{n-1}$ be an ordering of the edges in
  $\mathcal{T}$ which is consistent with the natural ordering; that
  is, if $g_i \prec g_j$ then $i < j$.  Let $0 < t_1 < t_2 < \dots <
  t_n$ be given, with $t_{i+1} - t_i$ sufficiently large. Let $k$ and
  $k_2$ be positive.
  
  For each node $i\in\mathcal{V}$ we define the controller as follows.
  Given $i$, there is exactly one edge in $\mathcal{T}$  which has
  destination $i$; define $j$ so that $g_j$ is that edge. Then let the controller
  be
  \[
    c_i(t) = \begin{cases}
      k y_i(t) & \text{if } t < t_1 \\[2mm]
      k y_i(t_1) + k_2 \sign(\tbeta_{g_j}(t))  & \text{if }
      \begin{aligned}[t]
        t_j \leq t < t_{j+1} \\
        \text{ and }i\neq r\\[1mm]
      \end{aligned}
      \\
      k y_i(t_1) & \text{otherwise}
    \end{cases}
    \]
\end{defn}

With the controller defined as above, the system is using a proportional controller
$c = ky$ for the time $t<t_1$. We will make the following assumption.

\begin{asm}
  \label{asm:conv}
  We assume that at time $t=t_1$ that the system has converged; that is
  both $\beta(t_1) = \beta^\ss$ and $\omega(t_1) = \omega^\ss$.
\end{asm}

We can now look at the system behavior as a result of using this controller.
First we look at the effect of a single interval.

\begin{lem}
  \label{lem:oneedge}
  Consider the dynamics of Proposition~\ref{prop:model}. Suppose $g\in\mathcal{E}$ is
  an edge with destination vertex $i=\dst(g)$. Suppose
  for $t \in [t_1,t_2]$ the controller is
  \[
  c(t) = q + k_2 \sign(\tbeta_g(t)) e_i
  \]
  where $q = (W-I)\wu$. Let $h=\abs{\tbeta_g(t_1)}/k_2$ and assume
  $t_2 > t_1+h$. Then
  \[
  \tbeta(t_2) = (I + B^\tp D E^g) \tbeta(t_1)
  \]
\end{lem}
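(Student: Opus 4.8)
The plan is to substitute the controller into the closed-loop $\tbeta$ dynamics, observe that the constant drift term vanishes so that the motion is driven entirely by the sign feedback on edge $g$, and then integrate over the interval. First I would use Proposition~\ref{prop:model} together with the definition $q=(W-I)\wu$ to write $\dot{\ttheta}(t)=\wu+c(t)=W\wu+k_2\sign(\tbeta_g(t))e_i$. Differentiating $\tbeta=B^\tp\ttheta$ gives $\dot{\tbeta}(t)=B^\tp W\wu+k_2\sign(\tbeta_g(t))B^\tp e_i$. Since $B^\tp\one=0$ (each edge has exactly one source and one destination) and $W=\one z^\tp$, the first term drops out, leaving the decoupled dynamics $\dot{\tbeta}(t)=k_2\sign(\tbeta_g(t))B^\tp e_i$.

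The key observation is that the $g$-th component closes on itself. Because $i=\dst(g)$, the entry $(B^\tp e_i)_g=S_{ig}-D_{ig}=-1$, so $\dot{\tbeta}_g(t)=-k_2\sign(\tbeta_g(t))$. This scalar equation is independent of the other components, and it drives $\tbeta_g$ monotonically toward zero at constant rate $k_2$, keeping $\sign(\tbeta_g(t))=\sign(\tbeta_g(t_1))$ on $[t_1,t_1+h)$ and reaching $\tbeta_g=0$ exactly at $t_1+h$, where $h=\abs{\tbeta_g(t_1)}/k_2$. Once $\tbeta_g=0$ the sign term vanishes, so $\dot{\ttheta}=W\wu$ and hence $\dot{\tbeta}=B^\tp W\wu=0$; the whole state is therefore at rest on $[t_1+h,t_2]$.

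It then remains to integrate. On $[t_1,t_1+h)$ the right-hand side $k_2\sign(\tbeta_g(t_1))B^\tp e_i$ is constant, and it is zero afterward, so
\[
\tbeta(t_2)=\tbeta(t_1)+h\,k_2\sign(\tbeta_g(t_1))\,B^\tp e_i=\tbeta(t_1)+\tbeta_g(t_1)\,B^\tp e_i,
\]
using $h\,k_2\sign(\tbeta_g(t_1))=\abs{\tbeta_g(t_1)}\sign(\tbeta_g(t_1))=\tbeta_g(t_1)$. Finally I would identify the correction term with the claimed matrix: since $E^g\tbeta(t_1)=\tbeta_g(t_1)e_g$ and the $g$-th column of $D$ is the indicator of $\dst(g)=i$, we have $DE^g\tbeta(t_1)=\tbeta_g(t_1)e_i$, whence $B^\tp DE^g\tbeta(t_1)=\tbeta_g(t_1)B^\tp e_i$. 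This yields $\tbeta(t_2)=(I+B^\tp DE^g)\tbeta(t_1)$, as required.

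The only delicate point is the switching behavior of the sign feedback: one must confirm that $\tbeta_g$ neither overshoots nor chatters, and that it has finished acting by $t_2$. This is handled cleanly by the decoupling established in the second step — because $\dot{\tbeta}_g$ depends only on $\sign(\tbeta_g)$ and not on the other, simultaneously moving, components, the scalar dynamics reaches the origin in time $h$ and remains there, so the standing hypothesis $t_2>t_1+h$ guarantees the sign term has completed its action by $t_2$.
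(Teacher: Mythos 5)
Your proposal is correct and follows essentially the same route as the paper's proof: substitute the controller, use $B^\tp W = 0$ to kill the drift term, observe that $\tbeta_g$ obeys the decoupled scalar on-off dynamics $\dot{\tbeta}_g = -k_2\sign(\tbeta_g)$ reaching zero at time $t_1+h$, integrate the piecewise-constant right-hand side, and identify $B^\tp e_i e_g^\tp = B^\tp D E^g$. If anything, your write-up is more careful than the paper's (which contains minor typos such as $BT^\tp$ for $B^\tp$ and omits the explicit check that $(B^\tp e_i)_g = -1$ and that the state is at rest after $t_1+h$), but the underlying argument is identical.
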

\begin{proof}
  From the dynamics, we have
  \begin{align*}
    \dot\tbeta(t) &= BT^\tp \dot\ttheta \\
    &= B^\tp (q + \wu + k_2 \sign \tbeta_g(t) e_i) \\
    &= k_2 \sign \tbeta_g(t) e_i
  \end{align*}
  since $B^\tp W = 0$. Therefore
  \[
  \dot\tbeta_g(t) = -k_2 \sign \tbeta_g(t)
  \]
  which is a scalar on-off feedback system. Hence $\tbeta_g(t_1+h) =
  0$. Let $s=\sign(\tbeta_g(t_1))$ then
  \begin{align*}
    \tbeta(t_2) &= \tbeta(t_1) + \int_{t_1}^{t_2} \dot\tbeta(t)\,dt \\
    &=  \tbeta(t_1) + \int_{t_1}^{t_1+h}  k_2 s B^\tp e_i \, dt \\
    &=  \tbeta(t_1) + k_s s h B^\tp e_i 
  \end{align*}
  Now $hk_2s= e_g^\tp \beta(t_1)$ and so
  \begin{align*}
    \tbeta(t_2) &= (I + B^\tp e_i e_g^\tp) \tbeta(t_1)  \\
    &= (I + B^\tp D E^g) \tbeta(t_1) 
  \end{align*}
  as desired.
\end{proof}

In Lemma~\ref{lem:oneedge}, a critical assumption is that the
controller has access to $q$. This is possible, even though $\wu$ is
not known by the controller, by making use of the equilibrium of the
proportional controller. Specifically, a proportional controller has
an equilibrium such that $k y = (W-I)\wu$. By applying the buffer
adjustments after the system has reached equilibrium, each node $i$
has access to $k y_i$, which is sufficient to apply the result.

\begin{lem}
  \label{lem:step}
  Consider the dynamics of Proposition~\ref{prop:model} and the
  controller of Definition~\ref{defn:ctrl}, and let
  Assumption~\ref{asm:conv} hold.  Then
  \[
  \tbeta(t_{j+1}) = (I + B^\tp D  E^{g_j}) \tbeta(t_j)
  \]
  and in particular
  \[
  \tbeta_{g_j}(t_{j+1}) = 0
  \]
  for all $j=1,\dots,n-1$.
\end{lem}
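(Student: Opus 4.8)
The plan is to recognize the controller of Definition~\ref{defn:ctrl}, restricted to a single interval $[t_j,t_{j+1})$, as an instance of the single-edge controller analyzed in Lemma~\ref{lem:oneedge}, and then to apply that lemma interval by interval. Once the hypotheses of Lemma~\ref{lem:oneedge} are checked on each interval, both displayed conclusions follow immediately; so the real content is verifying those hypotheses, not performing any new computation.

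First I would pin down the constant (frozen) part of the controller. By Assumption~\ref{asm:conv} the system is in steady state at $t_1$, so $\omega(t_1)=\wss=W\wu$. For $t<t_1$ the control law is the proportional law $c=ky$ and $\omega=\wu+c$; taking the limit $t\to t_1^-$ and using that $y=D\tbeta$ is continuous (the dynamics have integrable right-hand side), I get $k\,y(t_1)=\wss-\wu=(W-I)\wu=q$. Hence the frozen term $k\,y(t_1)$ that appears in both the middle and the ``otherwise'' branches of the controller for all $t\ge t_1$ is exactly the constant drift $q$ assumed in Lemma~\ref{lem:oneedge}. The remaining part of Assumption~\ref{asm:conv}, $\beta(t_1)=\beta^\ss$, fixes the initial buffer state $\tbeta(t_1)$ needed below.

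Next I would identify which node pulses on $[t_j,t_{j+1})$. For any node $i'$ write $g_{j(i')}$ for the unique tree edge with destination $i'$; the middle branch of Definition~\ref{defn:ctrl} is active for $i'$ precisely when $j(i')=j$, i.e. when $i'=\dst(g_j)$. Therefore on $[t_j,t_{j+1})$ every node other than $i=\dst(g_j)$ (the root included) contributes only the constant $q$, while $i$ contributes $q+k_2\sign(\tbeta_{g_j}(t))$. In vector form $c(t)=q+k_2\sign(\tbeta_{g_j}(t))\,e_i$, which is exactly the hypothesis of Lemma~\ref{lem:oneedge} with $g=g_j$ and endpoints $t_j,t_{j+1}$.

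I would then argue by induction on $j$, the base case and inductive step being identical in form. Given the value $\tbeta(t_j)$, set $h_j=\abs{\tbeta_{g_j}(t_j)}/k_2$; since the trajectory up to $t_j$ is fully determined by the preceding intervals, $h_j$ is a fixed number, and the ``sufficiently large'' spacing assumed in Definition~\ref{defn:ctrl} guarantees $t_{j+1}-t_j>h_j$. Applying Lemma~\ref{lem:oneedge} on $[t_j,t_{j+1}]$ then yields $\tbeta(t_{j+1})=(I+B^\tp D E^{g_j})\tbeta(t_j)$ together with $\tbeta_{g_j}(t_{j+1})=0$, which is the inductive step and determines $\tbeta(t_{j+1})$ for the next stage. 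The only point requiring care is the bookkeeping that makes the time-spacing condition well posed: because $h_j$ depends on $\tbeta(t_j)$, which is itself produced by the earlier intervals, the induction must carry the exact value of $\tbeta(t_j)$ forward so that each gap can be chosen large enough. This ordering of the argument, rather than any analytic difficulty, is the only real obstacle.
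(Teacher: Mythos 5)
Your proof is correct and follows essentially the same route as the paper: both establish from Assumption~\ref{asm:conv} that the frozen term satisfies $k y(t_1) = (W-I)\wu = q$, and then apply Lemma~\ref{lem:oneedge} interval by interval with $g = g_j$ and pulsing node $i = \dst(g_j)$. Your version simply spells out details the paper leaves implicit (identifying the unique pulsing node per interval, the induction carrying $\tbeta(t_j)$ forward, and the time-spacing condition $t_{j+1}-t_j > h_j$), which is a fair elaboration rather than a different argument.
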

\begin{proof}
  Using Assumption~\ref{asm:conv}, we have $\dot\ttheta(t_1)=0$ and so
  $ky(t_1) =(W-I)\wu$.  The proof then follows from applying
  Lemma~\ref{lem:oneedge} to the controller in
  Definition~\ref{defn:ctrl}.
\end{proof}

Lemma~\ref{lem:step} shows that in the interval $[t_j,t_{j+1}]$, one
of the elastic buffers is set to zero, and the frequency at the end of
the interval is the same as it was at the start. We now consider the
effect of such a step on the other elastic buffers.
\begin{lem}
  \label{lem:stepbuffers}
  Assume $g\in\mathcal{E}$ is an edge. Let $x\in\R^m$, and
  \[
  z = (I+B^\tp D E^g) x
  \]
  Then $z_{g} = 0$ 
  and for all $l\in\mathcal{E}$ such that $\src(l) \neq \dst(g)$ and $\dst(l) \neq \dst(g)$
  we have $  z_l = x_l$.
\end{lem}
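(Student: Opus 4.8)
The plan is to reduce the matrix $B^\tp D E^g$ to a single rank-one correction and then read off both claims directly from the definitions of the source and destination incidence matrices.

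First I would simplify the action of $DE^g$ on $x$. Since $E^g$ has the single nonzero entry $E^g_{gg}=1$, we have $E^g x = x_g e_g$ with $e_g\in\R^m$ the $g$-th standard basis vector, and multiplying by $D$ extracts the $g$-th column of $D$. By definition of the destination incidence matrix this column is the indicator of the node $\dst(g)$, so $DE^g x = x_g e_{\dst(g)}$ with $e_{\dst(g)}\in\R^n$. Hence
\[
z = x + x_g\, B^\tp e_{\dst(g)} .
\]
Next I would identify the entries of $B^\tp e_{\dst(g)}$: its $l$-th entry is $B_{\dst(g),l} = S_{\dst(g),l} - D_{\dst(g),l}$, which equals $+1$ when $\src(l)=\dst(g)$, equals $-1$ when $\dst(l)=\dst(g)$, and is $0$ otherwise. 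These cases are mutually exclusive precisely because the graph has no self loops. This yields the componentwise formula
\[
z_l = x_l + x_g\bl(S_{\dst(g),l} - D_{\dst(g),l}\br).
\]

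With this formula, both assertions follow at once. For $l=g$, the edge $g$ has destination $\dst(g)$ and, by the no-self-loop assumption, source distinct from $\dst(g)$, so $S_{\dst(g),g}=0$ and $D_{\dst(g),g}=1$; thus $z_g = x_g - x_g = 0$. For any $l$ with $\src(l)\neq\dst(g)$ and $\dst(l)\neq\dst(g)$, both $S_{\dst(g),l}$ and $D_{\dst(g),l}$ vanish, so the correction term is zero and $z_l = x_l$.

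I do not anticipate a genuine obstacle: the argument is an elementary incidence-matrix computation once $DE^g x$ is rewritten as $x_g e_{\dst(g)}$. The only step requiring care is the invocation of the no-self-loop hypothesis, which simultaneously guarantees that the $\pm1$ cases for $B_{\dst(g),l}$ cannot coincide and is exactly what forces the cancellation $z_g=0$ that drives the controlled buffer to its midpoint.
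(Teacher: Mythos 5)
Your proof is correct and is essentially the paper's own argument: the paper likewise computes the entries of $B^\tp D E^g$, finding that column $g$ equals $B^\tp e_{\dst(g)}$ (entries $+1$ where $\src(a)=\dst(g)$, $-1$ where $\dst(a)=\dst(g)$, zero otherwise) and all other columns vanish, from which both claims follow. Your rank-one reformulation $DE^g x = x_g e_{\dst(g)}$ and your explicit appeal to the no-self-loop hypothesis for the cancellation $z_g=0$ merely spell out details the paper leaves implicit.
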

\begin{proof}
  It follows directly from the definitions that
  \[
  (B^\tp DE^g)_{ab}
  = \begin{cases}
    -1 & \text{if $b=g$ and $\dst(a) = \dst(g)$} \\
    1 & \text{if $b=g$ and $\src(a) = \dst(g)$} \\
    0 & \text{otherwise}
  \end{cases}
  \]
  from which the result follows.
\end{proof}

\begin{lem}
  \label{lem:manybuffers}
   Consider the dynamics of Proposition~\ref{prop:model} and the
  controller of Definition~\ref{defn:ctrl}, and let
  Assumption~\ref{asm:conv} hold.  Then for every edge $a\in\mathcal{T}$
  \[
  \tbeta_a(t_n) = 0
  \]
  That is, after the controller is applied for $n-1$ steps, the buffer
  occupancy of all tree edges is zero.
\end{lem}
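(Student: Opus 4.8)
The plan is to fix an arbitrary tree edge, note that it is centered at the end of its own interval by Lemma~\ref{lem:step}, and then show that none of the later steps disturbs it. Concretely, fix $g_j \in \mathcal{T}$. Lemma~\ref{lem:step} gives $\tbeta_{g_j}(t_{j+1}) = 0$, so it suffices to prove that every subsequent step $p = j+1, \dots, n-1$ leaves the $g_j$-component of $\tbeta$ unchanged; chaining these equalities then yields $\tbeta_{g_j}(t_n) = \tbeta_{g_j}(t_{j+1}) = 0$, and since $g_j$ is arbitrary this proves the claim for every tree edge.

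For the per-step invariance I would apply Lemma~\ref{lem:stepbuffers} to the update $\tbeta(t_{p+1}) = (I + B^\tp D E^{g_p}) \tbeta(t_p)$ of Lemma~\ref{lem:step}. That lemma guarantees $\tbeta_{g_j}(t_{p+1}) = \tbeta_{g_j}(t_p)$ whenever $\src(g_j) \neq \dst(g_p)$ and $\dst(g_j) \neq \dst(g_p)$. The whole argument therefore reduces to the combinatorial statement that, for every $p > j$, the node $\dst(g_p)$ equals neither the source nor the destination of $g_j$.

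The destination case is immediate from the tree structure: in an outward directed spanning tree each non-root node has exactly one incoming tree edge, so distinct tree edges have distinct destinations and $\dst(g_j) \neq \dst(g_p)$. The source case is where the ordering does the work. Suppose for contradiction that $\src(g_j) = \dst(g_p)$. Then the edge $g_j$, running from $\src(g_j) = \dst(g_p)$ to $\dst(g_j)$, is a directed walk of length one in $\mathcal{T}$ from $\dst(g_p)$ to $\dst(g_j)$, so by the definition of the partial order $g_p \prec g_j$. The consistency of the enumeration in Definition~\ref{defn:ctrl} then forces $p < j$, contradicting $p > j$; hence $\src(g_j) \neq \dst(g_p)$.

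I expect the crux to be conceptual rather than computational: recognizing that the only way a later step can perturb an already-centered tree edge $g_j$ is through an edge terminating at the parent $\src(g_j)$, and that any such edge is an ancestor edge $g_p \prec g_j$, which the ordering guarantees was processed earlier. Once this bridge between the adjacency condition of Lemma~\ref{lem:stepbuffers} and the tree partial order is in place, the proof reduces to the two elementary endpoint checks above.
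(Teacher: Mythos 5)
Your proof is correct, and it rests on the same two ingredients as the paper's: the update formula of Lemma~\ref{lem:step} combined with the invariance criterion of Lemma~\ref{lem:stepbuffers}, with the destination case settled by uniqueness of incoming tree edges and the source case by the tree ordering. The difference is in the bookkeeping, and it matters. The paper runs a forward induction whose stated invariant is ``$\tbeta_a(t_{j+1})=0$ for all $a \preceq g_j$,'' i.e., it tracks only the \emph{ancestors} of the edge currently being processed, and for those edges it rules out $\src(a)=\dst(g_{j+1})$ by acyclicity of the tree. You instead fix a single tree edge $g_j$, and show that no later step $p>j$ can touch it, ruling out $\src(g_j)=\dst(g_p)$ via consistency of the enumeration with the partial order ($g_p \prec g_j$ would force $p<j$). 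Your formulation is actually the tighter one: the paper's invariant, read literally, neither closes its own inductive step (an edge $a \prec g_{j+1}$ need not satisfy $a \preceq g_j$, since $g_j$ and $g_{j+1}$ may lie in different branches, so the stated hypothesis does not cover it) nor delivers the lemma's conclusion at $j=n-1$ (the edges $\preceq g_{n-1}$ form only the root-to-$\dst(g_{n-1})$ path, not all of $\mathcal{T}$). Your per-edge argument — equivalently, the invariant ``$\tbeta_{g_i}(t_{j+1})=0$ for all $i\le j$'' — is what the paper evidently intends, and it repairs both defects while using exactly the same lemmas.
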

\begin{proof}
  We have by Lemma~\ref{lem:step} that 
  \[
  \tbeta(t_{j+1}) =  \bl(I + B^\tp D E^{g_j} \br) \tbeta(t_j)
  \]
  By Lemma~\ref{lem:stepbuffers} we have that $\tbeta_{g_j}(t_{j+1})= 0$ for every $j=1,\dots,n-1$.
  We claim that  $\tbeta_a(t_{j+1}) = 0$ for all edges $a \preceq g_j$ for all $j$. The proof
  follows by induction. By  Lemma~\ref{lem:stepbuffers} it holds at $j=1$. Now suppose it holds
  at step $j$. We will show that it holds at step $j+1$.  We have
  \[
  \tbeta(t_{j+2}) =  \bl(I + B^\tp D E^{g_{j+1}} \br) \tbeta(t_{j+1})
  \]
  Now if $a\preceq g_{j+1}$ then $\src(a) \neq j+1$, since the tree
  must be acyclic. Also there is only one edge $a$ such that $\dst(a)
  = \dst(g_{j+1})$, since it is an outward tree, and that edge is $a =
  g_{j+1}$.  So by Lemma~\ref{lem:stepbuffers} we have
  $\tbeta_{g_{j+1}}(t_{j+2}) = 0$. All other edges $a$ have $\tbeta_a(t_{j+2}) = \tbeta_a(t_{j+1})$
  and so using the induction hypothesis this must equal zero.  
\end{proof}

The following result is well-known.

\begin{thm}
  \label{thm:smith}
  Suppose $\mathcal G = (\mathcal V, \mathcal E)$ is a directed graph
  with incidence matrix $B$, and suppose edges $1,\dots,n-1$ form a
  spanning tree.
  Partition $B$ according to
  \[
  B = \bmat{B_{11}  & B_{12} \\ -\one^\tp B_{11} & -\one^\tp B_{12}}
  \]
  then $B_{11}$ is unimodular. Further
  \[
  B =
  \bmat{ B_{11} & 0 \\ -\one^\tp B_{11} & 1}
  \bmat{I & 0 \\ 0 & 0}
  \bmat{I & N \\ 0 & I}
  \]
  where $N = B_{11}^{-1}B_{12}$.
\end{thm}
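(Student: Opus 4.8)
The statement naturally splits into two parts: the unimodularity of $B_{11}$, where the real content lies, and the factorization, which is a routine verification once $B_{11}$ is known to be invertible. Before either, I would record the elementary fact that every column of $B = S-D$ sums to zero, since each edge contributes a single $+1$ at its $\src$ and a single $-1$ at its $\dst$; equivalently $\one^\tp B = 0$. This immediately justifies the assumed form of the partition, because the bottom row of $B$ is the negative of the sum of its top $n-1$ rows, which gives exactly $\bmat{-\one^\tp B_{11} & -\one^\tp B_{12}}$.

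For the unimodularity of $B_{11}$, I would induct on $n$ using the existence of a leaf in the spanning tree. The rows of $B_{11}$ are indexed by the nodes $1,\dots,n-1$ and its columns by the $n-1$ tree edges. Since a tree on $n\geq 2$ nodes has at least two leaves, there is a leaf $v\in\{1,\dots,n-1\}$, i.e. distinct from the omitted node $n$. A leaf is incident to exactly one tree edge $e$, so row $v$ of $B_{11}$ has a single nonzero entry, equal to $\pm 1$, in column $e$. Expanding the determinant along row $v$ gives $\det B_{11} = \pm \det B_{11}'$, where $B_{11}'$ is obtained by deleting row $v$ and column $e$. But $B_{11}'$ is precisely the reduced incidence matrix of the tree with leaf $v$ and edge $e$ removed, which is a spanning tree on the remaining $n-1$ nodes still omitting node $n$, so by the inductive hypothesis $\det B_{11}' = \pm 1$. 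The base case $n=2$ is a single edge, for which $B_{11}$ is the scalar $\pm 1$. Hence $\det B_{11} = \pm 1$ and $B_{11}$ is unimodular; in particular $N = B_{11}^{-1}B_{12}$ is well defined and integer-valued.

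With $B_{11}$ invertible, the factorization is confirmed by direct block multiplication. The two right factors multiply to $\bmat{I & 0 \\ 0 & 0}\bmat{I & N \\ 0 & I} = \bmat{I & N \\ 0 & 0}$, and left-multiplying by $\bmat{B_{11} & 0 \\ -\one^\tp B_{11} & 1}$ produces the block matrix with entries $B_{11}$, $B_{11}N$, $-\one^\tp B_{11}$, and $-\one^\tp B_{11}N$ in the natural positions. Since $B_{11}N = B_{12}$ by the definition of $N$, this recovers exactly the partitioned form of $B$. Throughout, one must keep the block dimensions consistent: the left factor is $n\times n$ with a trailing scalar $1$, the middle factor is the $n\times m$ matrix $\bmat{I & 0 \\ 0 & 0}$ whose leading block $I$ has size $(n-1)\times(n-1)$, and the right factor is $m\times m$.

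The main obstacle is the unimodularity claim; the rest is bookkeeping. The only subtlety in the induction is ensuring the chosen leaf is not the omitted node $n$, which is guaranteed because a tree has at least two leaves, together with verifying that leaf-and-edge deletion again yields a spanning tree with node $n$ omitted so that the inductive hypothesis applies. An alternative route to invertibility over $\R$, though not directly to $\det = \pm 1$, is to note that the tree columns of $B$ are linearly independent with left null space spanned by $\one$, so that $x^\tp B_{11} = 0$ forces the extension $\bmat{x \\ 0}$ to be a multiple of $\one$ and hence zero.
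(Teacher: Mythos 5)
Your proof is correct. Note, however, that the paper does not actually prove this theorem: its "proof" is a one-line citation to Theorem~2.10 of Bapat's \emph{Graphs and Matrices}, where the result appears as part of the Smith normal form treatment of incidence matrices. Your argument is therefore a self-contained replacement for that citation rather than a variant of an in-paper proof. The two halves of your write-up do exactly the right things: the leaf-deletion induction (pick a leaf $v \neq n$ of the tree, which exists because a tree on $n \geq 2$ nodes has at least two leaves; expand $\det B_{11}$ along row $v$, which has a single $\pm 1$ entry; recurse on the smaller tree, still omitting node $n$) is the standard elementary route to unimodularity of the reduced tree incidence matrix, and the block multiplication
\[
\bmat{ B_{11} & 0 \\ -\one^\tp B_{11} & 1}
\bmat{I & N \\ 0 & 0}
= \bmat{B_{11} & B_{11}N \\ -\one^\tp B_{11} & -\one^\tp B_{11}N}
= \bmat{B_{11} & B_{12} \\ -\one^\tp B_{11} & -\one^\tp B_{12}},
\]
together with your dimension bookkeeping for the $n \times m$ middle factor, is precisely the routine verification the citation conceals. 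You also correctly ground the assumed shape of the partition in the fact that $\one^\tp B = 0$. What the paper's citation buys is brevity and a pointer to the broader Smith-form context; what your proof buys is a complete, elementary, and purely combinatorial argument that a reader can check without consulting the reference, plus the useful side observation that $N$ is integer-valued.
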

\begin{proof}
  See for example Theorem~2.10 of~\cite{bapat}.
\end{proof}

\begin{lem}
  \label{lem:range}
  Suppose $\mathcal G = (\mathcal V, \mathcal E)$ is a directed graph
  with incidence matrix $B$, and suppose edges $\mathcal{T} = \{1,\dots,n-1\}$ form a
  spanning tree. Let $B$ be the incidence matrix of the graph and
  suppose $y\in\range(B^\tp)$. If $y_i=0$ for all $i\in\mathcal{T}$ then $u=0$.
\end{lem}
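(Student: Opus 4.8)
The plan is to read the conclusion as $y = 0$ (the displayed $u$ appears to be a typo, since $u$ does not occur in the hypotheses), and to prove it by exploiting the factorization of $B$ supplied by Theorem~\ref{thm:smith}, which is precisely the statement that non-tree quantities are pinned down by tree quantities. First I would write $y \in \range(B^\tp)$ as $y = B^\tp u$ for some $u \in \R^n$, order the edges so that the first $n-1$ rows of $B^\tp$ correspond to the tree edges $\mathcal{T}$, and partition $u = \bmat{u_1 \\ u_2}$ with $u_1 \in \R^{n-1}$ and $u_2 \in \R$ the coordinate of the root row.

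Transposing the factorization of Theorem~\ref{thm:smith} gives
\[
B^\tp = \bmat{I & 0 \\ N^\tp & I}\bmat{I & 0 \\ 0 & 0}\bmat{B_{11}^\tp & -B_{11}^\tp\one \\ 0 & 1}.
\]
Carrying out the block multiplication against $u$ --- the rightmost factor produces $B_{11}^\tp v$ in its top block with $v = u_1 - u_2\one$, the middle factor annihilates the bottom block, and the leftmost factor copies $N^\tp$ times the top block into the bottom --- I would obtain
\[
y = \bmat{B_{11}^\tp v \\ N^\tp B_{11}^\tp v}.
\]
The decisive structural fact is then visible: the tree-edge block is $y_{\mathcal T} = B_{11}^\tp v$ and the non-tree block equals $N^\tp y_{\mathcal T}$, so every element of $\range(B^\tp)$ is determined on the whole edge set by its restriction to the tree.

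With this in hand the conclusion is immediate. The hypothesis $y_i = 0$ for all $i \in \mathcal{T}$ says $y_{\mathcal T} = 0$, hence the non-tree block $N^\tp y_{\mathcal T} = 0$ as well, and therefore $y = 0$.

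I expect the only real obstacle to be bookkeeping rather than insight: taking the transpose of the three-factor product in the correct order, keeping the row/column partition straight (tree edges versus non-tree edges; first $n-1$ nodes versus root), and tracking the column-sum terms $-B_{11}^\tp\one$ so that the top block collapses cleanly to $B_{11}^\tp v$. As an independent sanity check, and a proof that sidesteps Theorem~\ref{thm:smith} entirely, I would note the combinatorial argument: $y = B^\tp u$ means $y_e = u_{\src(e)} - u_{\dst(e)}$, so $y_{\mathcal T} = 0$ forces $u$ to take equal values at the endpoints of every tree edge; since $\mathcal T$ spans and connects all of $\mathcal V$, the vector $u$ is constant on every node, and hence $y_e = 0$ for every edge $e$.
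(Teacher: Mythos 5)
Your main argument is essentially identical to the paper's proof: it transposes the factorization of Theorem~\ref{thm:smith}, observes that every element of $\range(B^\tp)$ has the form $\bmat{z \\ N^\tp z}$ (the paper simply names $z$ what you write as $B_{11}^\tp v$), and concludes that vanishing of the tree block forces $y=0$; you are also right that the statement's ``$u=0$'' is a typo for $y=0$. The combinatorial argument you append as a sanity check --- $y_e = u_{\src(e)} - u_{\dst(e)}$, so vanishing on tree edges forces $u$ to be constant across the connected spanning tree and hence $y = B^\tp u = 0$ --- is a correct, more elementary alternative that the paper does not give and that sidesteps the unimodular factorization entirely.
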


\begin{proof}
  Since $y\in\range(B^\tp)$ we have, by Theorem~\ref{thm:smith}, that there
  exists $x\in\R^n$ such that
  \[
  y = \bmat{I & 0 \\ N^\tp & I}
  \bmat{I & 0 \\ 0 & 0 }
  \bmat{B_{11}^\tp & -B_{11}^\tp \one \\ 0 & 1} x
  \]
  and hence  $y=\bmat{I \\ N^\tp} z$ for some $z\in\R^{n-1}$. Now,
  since $y_i=0$ for all $i\in\mathcal{T}$ we have
  \[
  \bmat{0 \\ \hat{y}} = \bmat{I \\ N^\tp} z
  \]
  where $\hat{y} \in\R^{m-n+1}$. Hence $z=0$ and $\hat{y}=0$. 
\end{proof}

There are several parameters in the controller. The theoretical
requirements are that $k$ and $k_2$ be positive and that the time
intervals $t_{i+1}-t_i$ be sufficiently large. In practice the system
is not sensitive to these choices.  The proportional gain $k$ is
limited in size by the sample rate in a practical implementation.
Another practical consideration is that nodes do not have access to
the exact time $t$. The controller is not dependent on the exact
choice of $t_i$ and so small inaccuracies here do not affect its
behavior.  Other constraints, such as the quantization in the
frequency control mechanism, may also play a role but we do not
analyze that here. The controller also requires determination of a
spanning tree in advance, and the choice of any order consistent with
the tree, which can be determined from the topology via standard
algorithms before the controller is run. An additional feature of the
implementation is that the controller of Definition~\ref{defn:ctrl}
only specifies the control input up to time $t_n$; after this time,
the system is in a relative equilibrium and a simple proportional
controller may be used.

We can now state the main result of this paper. If the controller is
determined using the spanning tree and the elastic buffers of the
edges on this spanning tree are centered by adjusting the frequencies
of the nodes in an ordering consistent with the partial ordering
according to $T$, then all elastic buffers will
be centered. This is stated below.
  
\begin{thm}
  Consider the dynamics of Proposition~\ref{prop:model} and the
  controller of Definition~\ref{defn:ctrl}, and let
  Assumption~\ref{asm:conv} hold. Then
  \[
  \tbeta(t_n) = 0
  \]
\end{thm}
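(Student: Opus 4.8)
The plan is to assemble two facts that the preceding lemmas have already supplied: that the buffer occupancy vector always lies in $\range(B^\tp)$, and that at time $t_n$ it vanishes on every edge of the spanning tree. The structural Lemma~\ref{lem:range} then forces the entire vector to be zero, so the proof is a short synthesis rather than a fresh computation.

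First I would observe that throughout the evolution Proposition~\ref{prop:model} gives $\tbeta(t) = B^\tp \ttheta(t)$, and hence in particular $\tbeta(t_n)\in\range(B^\tp)$. This is the algebraic form of the cycle-conservation property highlighted in the introduction: any vector in $\range(B^\tp)$ has zero sum around every directed cycle, so the occupancies behave like potential differences and are completely determined by their values on a spanning tree.

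Next I would invoke Lemma~\ref{lem:manybuffers}, which states that once the $n-1$ pulsed control intervals are complete we have $\tbeta_a(t_n)=0$ for every tree edge $a\in\mathcal{T}$. This is the substantive input, and I would emphasize that its content rests on the induction inside that lemma: processing the tree edges in an order consistent with the partial ordering $\prec$ guarantees, via Lemma~\ref{lem:stepbuffers}, that centering $g_{j+1}$ perturbs only buffers incident to $\dst(g_{j+1})$ and therefore leaves undisturbed every tree edge already centered at an earlier step.

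Finally, with $\tbeta(t_n)\in\range(B^\tp)$ and $\tbeta_a(t_n)=0$ for all $a\in\mathcal{T}$, Lemma~\ref{lem:range} yields $\tbeta(t_n)=0$ immediately, which is exactly the claim. I expect no real obstacle at this last assembly step; the genuine difficulty has already been localized into the two supporting results, namely the dynamical bookkeeping of Lemma~\ref{lem:manybuffers} and the unimodularity and Smith-form argument of Theorem~\ref{thm:smith} underlying Lemma~\ref{lem:range}, which together make the tree edges a determining set for any vector in $\range(B^\tp)$.
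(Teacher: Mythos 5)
Your proposal is correct and follows exactly the paper's own proof: invoke Lemma~\ref{lem:manybuffers} for the tree edges, note $\tbeta(t_n) = B^\tp\ttheta(t_n) \in \range(B^\tp)$, and conclude via Lemma~\ref{lem:range}. The additional commentary you give on why the supporting lemmas work is accurate but not needed for this final assembly step.
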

\begin{proof}
  Lemma~\ref{lem:manybuffers} show that $\tbeta_a(t_n)=0$ for
  all $a$ in the spanning tree. Now since $\tbeta = B^\tp \ttheta$
  we know that $\tbeta\in\range(B^\tp)$.   Using Lemma~\ref{lem:range}
  gives the desired result.
\end{proof}

Lemma~\ref{lem:manybuffers} shows that since the orderings proceed
away from the root, no adjustment de-centers any elastic buffer on the
tree preceding it.  Therefore after this process, all of the elastic
buffers on the tree are centered.  Lemma~\ref{lem:range} then provides
the final step, showing that if the elastic buffers on a spanning tree
are all centered, then every elastic buffer is centered.

\section{Conclusions}

This paper presents a novel approach, termed \emph{frame rotation},
for achieving buffer centering in \bittide synchronized networks.

We provide examples and analysis that show how purely local and
uncoordinated buffer adjustments can shift the equilibrium and thereby
de-center other buffers within the system.

To overcome this, we propose a structured control strategy predicated
on the construction of a directed spanning tree of the underlying
graph.  We give a method which carefully orchestrates the frequency
adjustments of individual nodes in an order consistent with this
spanning tree, and employs a pulsed feedback mechanism to adjust the
buffer occupancy of specific tree edges. We prove that this approach
ensures the convergence of all elastic buffers to their desired
equilibrium.

The significance of this work lies in providing a robust and
theoretically grounded method for managing a critical aspect of
\bittide system operation. While our approach necessitates an initial
coordination phase to establish the spanning tree and the processing
order, the subsequent adjustments are performed locally, leveraging
readily available buffer occupancy information.  This methodology
provides a new and reliable technique for controlling the data flows
on \bittide networks. Future work could explore the resilience of this
approach to dynamic network changes, in particular adding and removing
nodes, and investigate methods for a more adaptive or fully
decentralized determination of the spanning tree and processing order.

\bibliographystyle{abbrv}

\end{document}